\documentclass[12pt]{article}
\usepackage[cm]{fullpage}

\usepackage{amsmath,amssymb,amsthm}
\usepackage{pgfplots}
\usepackage{comment}
\usepackage[textsize=footnotesize,color=green!40]{todonotes}
\usepackage[utf8]{inputenc}
\usepackage{amsfonts}
\usepackage{amssymb}
\usepackage{mdframed}
\usepackage{framed}
\usepackage{hyperref}
\usepackage{mathtools}
\usepackage{authblk}
\usepackage{fancyhdr}
\usepackage{float}

\usepackage[noend]{algorithmic}
\usepackage[algoruled,vlined,nofillcomment,algo2e]{algorithm2e}
\usepackage{bm}

\usepackage{tikz}
\usetikzlibrary{calc,backgrounds}
\usetikzlibrary{arrows,calc,decorations.pathreplacing,positioning,shapes,shapes.geometric,shapes.callouts}
\usetikzlibrary{shapes,intersections,external}

\newcommand{\mech}[1]{\mathcal{#1}}
\newcommand{\tup}[1]{\vec{#1}}
\newcommand{\rv}[1]{\mathsf{#1}}

\newcommand{\trv}[1]{\tup{\rv{#1}}}

\renewcommand{\Pr}{\mathbb{P}}
\newcommand{\Ex}{\mathbb{E}}
\newcommand{\Var}{\mathbb{V}}

\newcommand{\PP}{\mathbb{P}}
\newcommand{\TV}{\mathfrak{T}}
\renewcommand{\TV}{\mathrm{TV}}

\newcommand{\remove}[1]{}

\usepackage{amsthm}

\newtheorem{theorem}{Theorem}[section]
\newtheorem{lemma}{Lemma}[section]
\newtheorem{definition}{Definition}[section]
\newtheorem{corollary}{Corollary}[section]

 \date{}

\title{Improved Summation from Shuffling}
\author{Borja Balle\thanks{{\tt borja.balle@gmail.com}.}~~~~~James Bell\thanks{The Alan Turing Institute. {\tt jbell@turing.ac.uk}. Work supported by the UK Government’s Defence \& Security Programme in support of the Alan Turing Institute.}~~~~~Adria Gascon\thanks{Google. {\tt adriagascon@gmail.com}. Work partly done when A.G was at The Alan Turing Institute and Warwick University, and  supported by The Alan Turing Institute under the EPSRC
grant EP/N510129/1, and the UK Government’s Defence \& Security Programme in support of the Alan Turing Institute.}~~~~~Kobbi Nissim\thanks{Dept.\ of Computer Science, Georgetown University. {\tt kobbi.nissim@georgetown.edu}. Work supported by NSF grant no.~1565387, 
TWC: Large: Collaborative: Computing Over Distributed Sensitive Data. Work partly done while K.~N.\ was visiting the Alan Turing Institute.}}

\begin{document}

\maketitle

\begin{abstract}
  A protocol by Ishai et al.\ (FOCS 2006) showing how to implement distributed $n$-party summation from secure shuffling has regained relevance in the context of the recently proposed \emph{shuffle model} of differential privacy, as it allows to attain the accuracy levels of the curator model at a moderate communication cost. To achieve statistical security $2^{-\sigma}$, the protocol by Ishai et al.\ requires the number of messages sent by each party to {\em grow} logarithmically with $n$ as $O(\log n + \sigma)$. In this note we give an improved analysis achieving a dependency of the form $O(1+\sigma/\log n)$. Conceptually, this addresses the intuitive question left open by Ishai et al.\ of whether the shuffling step in their protocol provides a ``hiding in the crowd'' amplification effect as $n$ increases. From a practical perspective, our analysis provides explicit constants and shows, for example, that the method of Ishai et al.\ applied to summation of $32$-bit numbers from $n=10^4$ parties sending $12$ messages each provides statistical security $2^{-40}$.
\end{abstract}

\section{Introduction}
\label{sec:intro}

Ishai et al.~\cite{ikos} showed how to use anonymous communications as a building block for a variety of tasks, including secure computation of $n$-party summation. In the setting of Ishai et al., $n\geq 2$ users hold values $x_1,\ldots,x_n$ in $\mathbb{Z}_m$ and wish to reveal to a server the sum of their values (and nothing else) by simultaneously sending anonymous messages to the server. A naive solution would require each user $i$ to anonymously send $x_i$ messages to the server, which then counts the total number of received messages. The solution by Ishai et al.\ is much more efficient: each user $i$ splits their input $x_i$ into $k$ additive shares, and sends all shares anonymously to the server. The server then obtains $nk$ shares and reconstructs the result by adding them up. Surprisingly, it is shown in \cite{ikos} that
$k = O(\log m + \sigma + \log n)$
(see~\cite{DBLP:journals/corr/abs-1906-09116} for explicit, small constants)
suffices to achieve statistical security $\sigma$, in the sense that
the server the set of shares submitted by the users cannot distinguish two inputs $(x_1,\ldots,x_n)$ and $(y_1,\ldots,y_n)$ with $\sum_i x_i = \sum_i y_i$, except with advantage $2^{-\sigma}$.
We refer informally to the protocol from Ishai et al.\ as the {\em IKOS protocol}.

The work of Ishai et al.\ is of significant relevance in the context of the recently proposed {\em shuffle model} of Differential Privacy (DP)~\cite{BEMMRLRKTS17} (see also \cite{erlingsson2019amplification,DBLP:journals/corr/abs-1808-01394}). In this model, a trusted shuffler applies a random permutation to messages sent by users before these are received by a server. Note that this setting is essentially the same as the one of Ishai et al., as the shuffler provides an anonymous communication channel. This connection has been shown recently in independent works by Balle et al.~\cite{DBLP:journals/corr/abs-1906-09116} and Ghazi et al.~\cite{DBLP:journals/corr/abs-1906-08320} which, by a black-box application of the IKOS protocol, showed that in the shuffle model one can achieve the same accuracy/privacy trade-offs than in the central model of DP for the task of real summation. Crucially, this can be done while only requiring $O(\log n)$ messages per party, and this is enabled by the IKOS protocol. In fact, Ghazi et al.~\cite{DBLP:journals/corr/abs-1906-08320} had reinvented the protocol specifically for this application. These results improve on previous work by Cheu et al.~\cite{DBLP:journals/corr/abs-1808-01394}, where it was shown that the same goal can be achieved with $O(\sqrt{n})$ messages per party. 

An intriguing question was left open in the work of Ishai et al.: while the IKOS protocol works for every number of users $n \geq 2$, one could hope to improve the dependency of the number of required messages on $n$. Intuitively, a larger number of participants should allow a single user $i$'s contribution to ``hide in the crowd'' hence enabling splitting $x_i$ into a number of shares that does not increase with $n$. This question is of special relevance in the context of applications to the shuffle model of differential privacy, as a relatively large $n$ is required due to privacy-accuracy tradeoffs.

In this paper we resolve the above question positively, showing that larger $n$ does indeed help to reduce the number of required messages. Concretely, we show that for fixed $n, m$ and security parameter $\sigma$, it suffices to take the number of shares to be $k = \left\lceil\frac{2\sigma+\log_2(m)}{\log_2(n)-\log_2(e)}+2\right\rceil$. In particular, for securely computing, with security $\sigma=40$, the sum of $32$-bit numbers by $n=10^4$ users, our bounds show that $12$ messages per party are enough. Furthermore, one of these messages can be sent in the clear.

We recently learned that Ghazi, Manurangsi, Pagh, and Velingker have, independently of our work, obtained an analysis of the IKOS protocol that provides guarantees for a constant number of messages but we are unaware of the specifics.
 
\section{Preliminaries}
\label{sec:prelims}
We will denote the additive group that we wish to sum in by $\mathbb{G}$. This group was taken to be $\mathbb{Z}_m$ in the example in the introduction but is only required to be an abelian group of size $m$. As in the introduction we have $n$ users each holding a private value $x_i\in \mathbb{G}$ and we denote the tuple of these values as $\tup{x}=(x_1,\ldots ,x_n)$. We denote random variables by upper case letters $\rv{X},\rv{Y},\ldots$ and tuples of them $\tup{\rv{X}},\tup{\rv{Y}},\ldots$. We denote randomized maps $\mech{R},\mech{S},\ldots$.

For $k\geq 2$ and $x \in \mathbb{G}$ we define the \emph{$k$-additive sharing} of $x$ as the output of the randomized map $\mech{R}_k : \mathbb{G} \to \mathbb{G}^k$ given by $\mech{R}_k(x) = (\rv{Y}^{(1)}, \ldots, \rv{Y}^{(k)})$, where $\rv{Y}^{(j)}$, $j \in [k]$, is a tuple of uniformly random group elements (i.e.\ \emph{shares}) conditioned on $\sum_j \rv{Y}^{(j)} = x$.

We use the notation $\Pr$ for probability, $\Ex$ for expectation, $\Var$ for variance, and $\TV$ for total variation distance.

\section{Our results}
\label{sec:results}

We define the \emph{$k$-parallel IKOS protocol} with $n$ users as the randomized map $\mech{V}_{k,n} : \mathbb{G}^n \to (\mathbb{G}^n)^k$ obtained as follows.
Let $\mech{S}^{(j)} : \mathbb{G}^n \to \mathbb{G}^n$, $j \in [k]$, be $k$ independent shufflers returning a uniform random permutation of their inputs.
For any $\tup{x} = (x_1,\ldots,x_n) \in \mathbb{G}^n$ define the random variables $(\rv{Y}^{(1)}_i, \ldots, \rv{Y}^{(k)}_i) = \mech{R}_k(x_i)$, $i \in [n]$.
Then, the IKOS protocol returns, for $j \in [k]$, the result of independently shuffling the $j$th shares of all the users together:
\begin{align}
\mech{V}_{k,n}(\tup{x}) = \left(\mech{S}^{(1)}(\rv{Y}^{(1)}_1, \ldots, \rv{Y}^{(1)}_n), \ldots, \mech{S}^{(j)}(\rv{Y}^{(j)}_1, \ldots, \rv{Y}^{(j)}_n), \ldots, \mech{S}^{(k)}(\rv{Y}^{(k)}_1, \ldots, \rv{Y}^{(k)}_n)\right) \enspace \label{eq:ikos}
\end{align}
Whenever $n$ or $k$ (or both) are clear from the context we omit them to unclutter our notation.

The IKOS protocol provides a method to compute the sum of the users' inputs $\sum_{i \in [n]} x_i$ by summing all the group elements in the result $\mech{V}_{k,n}(\tup{x}) \in (\mathbb{G}^n)^k$.
This follows from observing that such sum is just a sum of the shares of every user.
We define the statistical security of the parallel IKOS protocol in the standard way. That is, we require the output distributions to be indistinguishable whenever the protocol is executed on two inputs $\tup{x}, \tup{x}' \in \mathbb{G}^n$ such that $\sum_i x_i = \sum_i x_i'$.
As usual, indistinguishability is measured in terms of total variation (i.e.\ statistical) distance.

This is made precise in the following definitions, where we consider the cases with both fixed and random inputs:
\begin{enumerate}
\item We say that a protocol $\mech{V}$ provides \emph{worst-case statistical security} with parameter $\sigma$ if for any $\tup{x}, \tup{x}' \in \mathbb{G}^n$, such that $\sum_{i \in [n]} x_i = \sum_{i \in [n]} x_i',$ we have $\TV(\mech{V}(\tup{x}),\mech{V}(\tup{x}')) \leq 2^{-\sigma}$.
\item We say that a protocol $\mech{V}$ provides \emph{average-case statistical security} with parameter $\sigma$ if we have $\Ex_{\tup{\rv{X}},\tup{\rv{X}}'}[\TV_{|\tup{\rv{X}},\tup{\rv{X}}'}(\mech{V}(\tup{\rv{X}}),\mech{V}(\tup{\rv{X}}'))] \leq 2^{-\sigma}$, where $\tup{\rv{X}}$ and $\tup{\rv{X}}'$ are $n$-tuples of uniform random elements from $\mathbb{G}$ conditioned on $\sum_{i \in [n]} \rv{X}_i = \sum_{i \in [n]} \rv{X}_i'$.
\end{enumerate}

The following theorem states our main technical result.

\begin{theorem}[Average-case security]\label{thm:avg-security}
The \emph{$k$-parallel IKOS protocol} with $k\geq 3$ and $n\geq 19$ users provides average-case statistical security with distance $2^{-\sigma}$, for
  \begin{equation}
    \sigma=\frac{(k-1)(\log_2(n)-\log_2(e))-\log_2(m)}{2}
  \end{equation}
  provided $\sigma\geq 1$.
\end{theorem}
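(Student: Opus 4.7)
The plan is to bound the average-case total variation via a chi-squared second-moment argument combined with a Fourier computation on $\mathbb{G}^{nk}$ that exploits the structure of additive sharing. First, since each shuffler erases within-column ordering, I identify the output with a $k$-tuple of multisets (histograms) in $\mathbb{G}$. Let $P_{\tup{x}}$ denote the distribution of $\mech{V}_{k,n}(\tup{x})$ and let $Q_s = \Ex_{\tup{\rv{X}}\,|\,\sum\rv{X}_i=s}[P_{\tup{\rv{X}}}]$ be its mean over inputs with sum $s$. By the triangle inequality applied to independent copies $\tup{\rv{X}}, \tup{\rv{X}}'$ conditioned on $\sum\rv{X}_i=\sum\rv{X}_i'=s$,
\[
\Ex\bigl[\TV(P_{\tup{\rv{X}}}, P_{\tup{\rv{X}}'})\bigr] \leq 2\,\Ex\bigl[\TV(P_{\tup{\rv{X}}}, Q_s)\bigr],
\]
and combining $\TV(P,Q)\leq\tfrac{1}{2}\sqrt{\chi^2(P,Q)}$ with Jensen's inequality further reduces the task to bounding $\Ex[\chi^2(P_{\tup{\rv{X}}}, Q_s)]$.

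A key structural observation simplifies the computation: for each row $i$, any $k-1$ of the $k$ additive shares $(\rv{Y}^{(1)}_i, \ldots, \rv{Y}^{(k)}_i)$ output by $\mech{R}_k(x_i)$ are jointly iid uniform in $\mathbb{G}$ and independent of $x_i$. Hence the joint pre-shuffle distribution of any $k-1$ columns is $n(k-1)$ iid uniform elements, and after shuffling the corresponding $k-1$ output multisets are iid uniform multisets independent of $\tup{x}$. All dependence of $\mech{V}(\tup{x})$ on $\tup{x}$ is therefore carried by the conditional distribution of the remaining multiset, and the chi-squared decomposes accordingly by the chain rule.

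I would then evaluate $\Ex[\chi^2]$ by Fourier analysis on $\mathbb{G}$. Representing each histogram by its character sums $\sum_i\chi(\rv{Y}^{(j)}_i)$, the row-wise identity
\[
\Ex\bigl[\textstyle\prod_{j=1}^{k} \chi_j(\rv{Y}^{(j)}_i)\bigr] = \mathbf{1}[\chi_1=\cdots=\chi_k]\,\chi_1(x_i)
\]
for $(\rv{Y}^{(j)}_i)_j=\mech{R}_k(x_i)$, applied row by row, annihilates almost all cross-terms in the chi-squared expansion. What remains is a sum over the $m-1$ nontrivial characters $\chi$ of $\mathbb{G}$ of $(k-1)$-fold moments of character sums of iid uniform samples, together with contributions reflecting the variance of $\sum_i\chi(\rv{X}_i)$ under $\tup{\rv{X}}$ uniform conditional on $\sum\rv{X}_i=s$.

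The main obstacle is the combinatorial estimate at the end: bounding this sum by $m\,(e/n)^{k-1}$. The factor $m$ comes from summing over nontrivial characters; the factor $(e/n)^{k-1}$ arises from $(k-1)$-fold character-sum moments, ultimately via Stirling's inequality $n!\geq(n/e)^n$. Tracking the explicit constants through these estimates and verifying that the hypotheses $n\geq 19$ and $k\geq 3$ suffice to yield the bound as stated is the most delicate part of the argument.
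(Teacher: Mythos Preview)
Your second-moment skeleton matches the paper: the triangle-inequality step and the passage from total variation to a collision/$\chi^2$ quantity are exactly the content of the paper's Lemma~4.1, which yields the bound $\sqrt{m^{kn-1}\Pr[\mech{V}(\tup{\rv{X}})=\mech{V}'(\tup{\rv{X}})]-1}$. The divergence is in how you propose to evaluate the collision probability.

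The structural observation that any $k-1$ columns of shares are i.i.d.\ uniform and independent of $\tup{x}$ is correct, but the conclusion you draw from it does not buy you what you claim. After the chain-rule reduction you are left with the conditional law of the $k$th multiset given the first $k-1$ \emph{multisets}. Because shuffling erases the row alignment, this conditional law is a mixture over all $(n!)^{k-1}$ ways of threading the observed multisets back into rows; the hidden permutations have not gone away, and the conditional $\chi^2$ is essentially as hard as the original problem. Concretely, if you carry out the Fourier expansion honestly you must introduce a character $\chi_{i,j}$ for each of the $nk$ coordinates. Your row-wise identity forces $\chi_{i,1}=\cdots=\chi_{i,k}=:\chi_i$, but the \emph{second} copy of the protocol (the one producing the collision) imposes, via the random permutations $\pi_1,\ldots,\pi_k$, the further constraints $\chi_{\pi_1^{-1}(i')}=\cdots=\chi_{\pi_k^{-1}(i')}$ for every $i'$. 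The surviving character assignments are therefore exactly those constant on the connected components of the multigraph on $[n]$ generated by the $\pi_j$, giving $m^{C(G)}$ terms, not $m-1$. So your assertion that ``what remains is a sum over the $m-1$ nontrivial characters'' is where the argument breaks; the Fourier computation done correctly \emph{is} the random-graph reduction.

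This is precisely what the paper does: it shows $\Pr[\mech{V}(\tup{\rv{X}})=\mech{V}'(\tup{\rv{X}})]\leq \Ex[m^{C(G)-kn}]$ for $G\sim R^*(n,2k)$ (Lemma~4.2), and then the real work is a combinatorial tail bound $\Pr[C(G)=c]\leq \tfrac{1.5^{c-1}}{c!}(e/n)^{(k-1)(c-1)}$ (Lemma~4.3), proved by induction on $c$ together with Stirling and a careful splitting of the sum over component sizes. The factor $(e/n)^{k-1}$ comes from $\binom{n}{s}^{-(k-1)}$ for small $s$ in this recursion, not from moments of a single character sum. Your proposal has no analogue of this step, and without it there is no route to the stated bound.
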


While the above theorem only states average-case security, a simple randomization trick recovers worst-case security at the cost of one extra message per party.
Moreover, such a message does not need to be shuffled. This corresponds to a small variation on the parallel IKOS protocol where one of the messages contributed by each user is not sent through a shuffler; i.e.\ it is possible to unequivocally associate one of the messages from the input to each user.
We define the \emph{$k$-parallel IKOS with randomized inputs protocol} as the randomized map $\mech{V}_{k,n} : \mathbb{G}^n \to (\mathbb{G}^n)^{k+1}$ obtained as follows.
Let $\mech{S}^{(j)} : \mathbb{G}^n \to \mathbb{G}^n$, $j \in [k]$, be $k$ independent shufflers returning a uniform random permutation of their inputs.
For any $\tup{x} = (x_1,\ldots,x_n) \in \mathbb{G}^n$ define the random variables $(\rv{Y}^{(1)}_i, \ldots, \rv{Y}^{(k+1)}_i) = \mech{R}_{k+1}(x_i)$, $i \in [n]$, obtained by sampling $k+1$ additive shares for each input.
Then, the IKOS with randomized inputs protocol returns, for $j \in [k]$, the result of independently shuffling the $j$th shares of all the users together, concatenated with the $k+1$th \emph{unshuffled} shares:
\begin{align}
\tilde{\mech{V}}_{k,n}(\tup{x}) = \left(\mech{S}^{(1)}(\rv{Y}^{(1)}_1, \ldots, \rv{Y}^{(1)}_n), \ldots, \mech{S}^{(j)}(\rv{Y}^{(j)}_1, \ldots, \rv{Y}^{(j)}_n), \ldots, \mech{S}^{(k)}(\rv{Y}^{(k)}_1, \ldots, \rv{Y}^{(k)}_n), (\rv{Y}^{(k+1)}_1, \ldots, \rv{Y}^{(k+1)}_n)\right) \enspace.
\label{eq:ikos-half}
\end{align}

\begin{corollary}\label{cor:wst-security}
The \emph{$k$-parallel IKOS with randomized inputs protocol} with $k\geq 3$ and $n\geq 19$ users provides worst-case statistical security with distance $2^{-\sigma}$,
with $\sigma$ given by the same expression as in Theorem~\ref{thm:avg-security}.
  Thus, for fixed $n,m$ and $\sigma$, it suffices, for worst-case security, to take the number of shuffled messages to be
  \begin{equation}
    k=\left\lceil\frac{2\sigma+\log_2(m)}{\log_2(n)-\log_2(e)}+1\right\rceil \enspace.
  \end{equation}
\end{corollary}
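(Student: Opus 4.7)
The strategy is to reduce worst-case security of $\tilde{\mech{V}}_{k,n}$ to the average-case guarantee of $\mech{V}_{k,n}$ given by Theorem~\ref{thm:avg-security}, exploiting the fact that the unshuffled $(k{+}1)$-th share acts as per-party randomness converting a worst-case input into an ``average-case'' one.

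First, I would note a distributional equivalence: sampling a $(k{+}1)$-additive sharing of $x_i$ is equivalent to first sampling $\rv{Y}^{(k+1)}_i$ uniformly from $\mathbb{G}$ and then sampling a fresh $k$-additive sharing of $\rv{X}'_i := x_i - \rv{Y}^{(k+1)}_i$. Collecting shares across users and applying the shufflers shows that $\tilde{\mech{V}}_{k,n}(\tup{x})$ has the same distribution as $(\mech{V}_{k,n}(\tup{\rv{X}}'), \tup{\rv{Y}}^{(k+1)})$, where $\tup{\rv{Y}}^{(k+1)}$ is uniform in $\mathbb{G}^n$ and $\tup{\rv{X}}' = \tup{x} - \tup{\rv{Y}}^{(k+1)}$.

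Second, fix $\tup{x}, \tup{x}' \in \mathbb{G}^n$ with $\sum_i x_i = \sum_i x_i'$. The marginal of $\tup{\rv{Y}}^{(k+1)}$ is uniform in both $\tilde{\mech{V}}_{k,n}(\tup{x})$ and $\tilde{\mech{V}}_{k,n}(\tup{x}')$, so the standard formula for $\TV$ of a joint with a matched marginal yields
\[
\TV(\tilde{\mech{V}}_{k,n}(\tup{x}), \tilde{\mech{V}}_{k,n}(\tup{x}')) = \Ex_{\tup{\rv{Y}} \sim \uniform(\mathbb{G}^n)}\left[\TV\bigl(\mech{V}_{k,n}(\tup{x} - \tup{\rv{Y}}), \mech{V}_{k,n}(\tup{x}' - \tup{\rv{Y}})\bigr)\right].
\]
Setting $\tup{\rv{X}} := \tup{x} - \tup{\rv{Y}}$ and $\tup{\rv{X}}' := \tup{x}' - \tup{\rv{Y}}$ makes both uniform on $\mathbb{G}^n$ with equal sums, so the right-hand side is a quantity of the form addressed by the average-case definition, except with the difference $\tup{\rv{X}}' - \tup{\rv{X}} = \tup{x}' - \tup{x}$ pinned to a fixed zero-sum vector.

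Third, I would conclude by invoking Theorem~\ref{thm:avg-security}. The proof underlying that theorem bounds $\Ex_{\tup{\rv{X}}}[\TV(\mech{V}_{k,n}(\tup{\rv{X}}), \mech{V}_{k,n}(\tup{\rv{X}} + \tup{d}))]$ uniformly over all $\tup{d}$ with $\sum_i d_i = 0$ by the same $2^{-\sigma}$ that appears in the averaged statement; hence the bound transfers. Finally, inverting $\sigma = \tfrac{(k-1)(\log_2 n - \log_2 e) - \log_2 m}{2}$ for $k$ and taking the ceiling yields the claimed formula for the number of shuffled messages. The main obstacle is precisely the pointwise-in-$\tup{d}$ step: by the permutation symmetry of $\mech{V}_{k,n}$ the expectation depends only on the multiset of entries of $\tup{d}$, and the Fourier/collision-style estimates typical of IKOS-type analyses yield bounds that are uniform in this multiset, so the average-case statement and the worst-case statement collapse to the same quantity.
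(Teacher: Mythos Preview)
Your approach is the same as the paper's: rewrite $\tilde{\mech{V}}_{k,n}(\tup{x})$ as $(\mech{V}_{k,n}(\tup{x}-\tup{\rv{U}}),\tup{\rv{U}})$ with $\tup{\rv{U}}$ uniform, condition on the common marginal $\tup{\rv{U}}$, and reduce the worst-case distance to an expectation over uniformly randomized inputs. You even flag a subtlety the paper glosses over: after conditioning, the pair $(\tup{\rv{X}},\tup{\rv{X}}')=(\tup{x}-\tup{\rv{U}},\tup{x}'-\tup{\rv{U}})$ has \emph{fixed} difference $\tup{d}=\tup{x}'-\tup{x}$, which is not literally the joint distribution in the average-case definition (e.g.\ $\tup{d}=\tup{0}$ makes the expression vanish while the average-case quantity does not).

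Where your proposal is weak is the patch for this issue. Appealing to ``permutation symmetry'' and ``Fourier/collision-style estimates typical of IKOS-type analyses'' is too vague, and it does not match what the proof of Theorem~\ref{thm:avg-security} actually does. The clean fix is the very triangle-inequality step that opens the proof of Lemma~\ref{lem:single_shuffling}: with $\tup{\rv{V}}$ uniform on $\{\tup{v}\in\mathbb{G}^{kn}:\sum\tup{v}=\sum\tup{\rv{X}}\}$,
\[
\TV\bigl(\mech{V}_{k,n}(\tup{\rv{X}}),\mech{V}_{k,n}(\tup{\rv{X}}+\tup{d})\bigr)\ \leq\ \TV\bigl(\mech{V}_{k,n}(\tup{\rv{X}}),\tup{\rv{V}}\bigr)+\TV\bigl(\tup{\rv{V}},\mech{V}_{k,n}(\tup{\rv{X}}+\tup{d})\bigr),
\]
and since $\tup{\rv{X}}+\tup{d}$ has the same marginal as $\tup{\rv{X}}$, taking expectations gives $2\,\Ex_{\tup{\rv{X}}}[\TV(\mech{V}_{k,n}(\tup{\rv{X}}),\tup{\rv{V}})]$, which is exactly the quantity the remainder of Lemma~\ref{lem:single_shuffling} bounds by $2^{-\sigma}$. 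This makes the bound uniform in $\tup{d}$ directly, with no need for symmetry or Fourier arguments.
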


Using the analysis in \cite{DBLP:journals/corr/abs-1906-09116}, $k$ is required to be at least $2\sigma $ and then grows with $n$ and $m$, so is not constant in $n$ and for $\sigma=40$ is always at least $80$. In this work the $2\sigma $ and the $\log(m)$ are divided by a logarithmic factor of $n$. Thus, for fixed $\sigma$, so long as $m$ grows at most polynomially in $n$, the number of required messages is bounded by a constant. Further, for reasonable parameter values, $k$ is much less than $80$ and, asymptotically, if $m= O(n^{2-\epsilon})$ it converges to $3$.
 
\section{The proof}
\label{sec:the_proof}

In this section we give our full proof of Theorem~\ref{thm:avg-security} and Corollary~\ref{cor:wst-security}. We start, in Section~\ref{sec:proof_outline}, by proving the Theorem assuming the Lemmas that are proved in Sections~\ref{sec:single_shuffling}, \ref{sec:graph_argument}, and \ref{sec:bound_on_graph}. Finally in Section~\ref{sec:random_inputs} we prove the Corollary.

\subsection{Proof Outline}
\label{sec:proof_outline}
The following proof is of Theorem~\ref{thm:avg-security}. In this proof we will provide forward references to the required lemmas which are then proved in the rest of this section.

\begin{proof}[Proof of Theorem~\ref{thm:avg-security}]
  Lemma~\ref{lem:single_shuffling} in Section~\ref{sec:single_shuffling} says that this protocol provides statistical security with distance bounded by the following expression, which is given here in terms of an event $E$ specified in Section~\ref{sec:single_shuffling}.
  \begin{equation*}
    \sqrt{m^{kn-1} \Pr[E] - 1}
  \end{equation*}
  In Section~\ref{sec:graph_argument} we define a distribution over multigraphs. Lemma~\ref{lem:graph_argument} says that if $G$ is drawn from this distribution and $C(G)$ is the number of connected component in $G$ then,
  \begin{equation*}
    \Pr[E]\leq m^{-kn}\Ex[m^{C(G)}]\enspace.
  \end{equation*}
  This expectation is then bounded in Lemma~\ref{lem:bound_on_graph}, which says that, if $n\geq 19$, $k\geq 3$ and $m\leq \frac{1}{2}\left(\frac{n}{e}\right)^{k-1}$, 
  \begin{equation*}
    \Ex[m^{C(G)}]\leq m+m^2\left(\frac{n}{e}\right)^{1-k}\enspace.
  \end{equation*}
  Note that the condition required on $m$ here is implied by
  \begin{equation*}
    \frac{(k-1)(\log_2(n)-\log_2(e))-\log_2(m)}{2}\geq 1
  \end{equation*}
  and thus follows from the condition in the theorem that $\sigma\geq 1$.

  Putting this together we get average case statistical security less than or equal to $\sqrt{m(e/n)^{k-1}}$. Thus we have average case statistical security $2^{-\sigma}$ for
  \begin{equation*}
    \sigma=\frac{(k-1)(\log_2(n)-\log_2(e))-\log_2(m)}{2}\enspace.
  \end{equation*}
\end{proof}
 
\subsection{Reduction to a single input and shuffling step}
\label{sec:single_shuffling}

To analyze the average-case statistical security of $\mech{V}$ we start by upper bounding the expected total variation distance between the outputs of two executions with random inputs by a function of single random input.

\begin{lemma}
Let $\mech{V}_{k,n}$ and $\mech{V}_{k,n}'$ denote two independent executions of the $k$-parallel IKOS protocol. Then we have:
\begin{align*}
\Ex_{\tup{\rv{X}},\tup{\rv{X}}'} [\TV_{|\tup{\rv{X}},\tup{\rv{X}}'}(\mech{V}_{k,n}(\tup{\rv{X}}), \mech{V}_{k,n}(\tup{\rv{X}}'))]
&\leq
\sqrt{m^{kn-1} \Pr[\mech{V}_{k,n}(\tup{\rv{X}}) = \mech{V}_{k,n}'(\tup{\rv{X}})] - 1} \enspace.
\end{align*}
\end{lemma}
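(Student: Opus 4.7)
The plan is to bound the conditional $\TV$ by an $\ell_2$ quantity via Cauchy--Schwarz, apply Jensen's inequality to pull the expectation outside, and exploit that the output of $\mech{V}_{k,n}(\tup{x})$ lies in a subset of $(\mathbb{G}^n)^k$ of size only $m^{kn-1}$ (not $m^{kn}$) since its entries sum to $\sum_i x_i$ a.s. Write $P_{\tup{x}}$ for the law of $\mech{V}_{k,n}(\tup{x})$. Because each user's $k$ shares sum to that user's input, the total of all entries of $\mech{V}_{k,n}(\tup{x})$ equals $s := \sum_i x_i$ almost surely, so $P_{\tup{x}}$ is supported on $\Omega_s := \{\omega \in (\mathbb{G}^n)^k : \sum_{i,j}\omega_i^{(j)} = s\}$, a set of size $N := m^{kn-1}$. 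Conditioning as in the statement enforces $\sum_i \rv{X}_i = \sum_i \rv{X}_i'$, so $P_{\tup{\rv{X}}}$ and $P_{\tup{\rv{X}}'}$ share this common support. Let $U_s$ denote the uniform distribution on $\Omega_s$; for any $P$ supported on $\Omega_s$, $\|P - U_s\|_2^2 = \mathrm{coll}(P) - 1/N$ with $\mathrm{coll}(P) := \sum_\omega P(\omega)^2$, and combining with $\|P - U_s\|_1 \leq \sqrt{N}\|P - U_s\|_2$ (Cauchy--Schwarz, since the support has size at most $N$) yields $4\TV(P, U_s)^2 \leq N \cdot \mathrm{coll}(P) - 1$.

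Applying the triangle inequality $\TV(P_{\tup{\rv{X}}}, P_{\tup{\rv{X}}'}) \leq \TV(P_{\tup{\rv{X}}}, U_s) + \TV(P_{\tup{\rv{X}}'}, U_s)$ together with $(a+b)^2 \leq 2(a^2 + b^2)$ and the previous bound yields
\[
\TV(P_{\tup{\rv{X}}}, P_{\tup{\rv{X}}'})^2 \leq \tfrac{N}{2}\bigl[\mathrm{coll}(P_{\tup{\rv{X}}}) + \mathrm{coll}(P_{\tup{\rv{X}}'})\bigr] - 1.
\]
Jensen's inequality gives $\Ex[\TV]^2 \leq \Ex[\TV^2]$. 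Taking expectations in the display above and using that $\tup{\rv{X}}$ and $\tup{\rv{X}}'$ are marginally identically distributed (each uniform on $\mathbb{G}^n$), one obtains $\Ex[\TV^2] \leq N \cdot \Ex[\mathrm{coll}(P_{\tup{\rv{X}}})] - 1$. Recognizing $\Ex[\mathrm{coll}(P_{\tup{\rv{X}}})] = \Pr[\mech{V}_{k,n}(\tup{\rv{X}}) = \mech{V}_{k,n}'(\tup{\rv{X}})]$ by definition of the collision probability, the claimed bound follows after taking a square root.

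The main point requiring care is identifying $N = m^{kn-1}$ rather than the naive $m^{kn}$: a direct application of Cauchy--Schwarz on the full ambient space would lose a factor of $m$, which would undermine the downstream graph-based analysis that hinges on this saving. Passing through the uniform reference $U_s$ rather than expanding $\|P_{\tup{\rv{X}}} - P_{\tup{\rv{X}}'}\|_2^2$ and computing the cross-collision $\Ex[\langle P_{\tup{\rv{X}}}, P_{\tup{\rv{X}}'}\rangle]$ separately is a convenient route to exactly the stated constants.
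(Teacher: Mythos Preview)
Your proof is correct and follows essentially the same route as the paper's: both pivot through the uniform distribution $U_s$ on $\Omega_s$ via a triangle inequality (the paper phrases this as integrating out $\tup{\rv{X}}'$ and observing that $\mech{V}(\tup{\rv{X}}')$ becomes exactly $U_s$), then pass from $\ell_1$ to $\ell_2$ by Cauchy--Schwarz/Jensen and identify the resulting second moment with the collision probability $\Pr[\mech{V}(\tup{\rv{X}})=\mech{V}'(\tup{\rv{X}})]$. The only cosmetic difference is the order of operations: the paper first removes $\tup{\rv{X}}'$ (paying the factor $2$ there) and then applies Jensen to $|Z-\Ex Z|$, whereas you square first, use $(a+b)^2\le 2(a^2+b^2)$, and apply Jensen to $\TV$ directly.
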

\begin{proof}
We first remove the expectation over $\tup{\rv{X}}'$ by taking its randomness inside the total variation distance (ie.\ switching from $\TV_{|\tup{\rv{X}},\tup{\rv{X}}'}$ to $\TV_{|\tup{\rv{X}}}$). Note that this is akin to a reverse Jensen inequality, and therefore we will need to pay a factor of $2$ to get the result via a triangle inequality.
The formal bound is obtained by taking $\tup{\rv{X}}''$ to be an independent copy of $\tup{\rv{X}}'$ and observing that
\begin{align*}
\Ex_{\tup{\rv{X}},\tup{\rv{X}}'} [\TV_{|\tup{\rv{X}},\tup{\rv{X}}'}(\mech{V}(\tup{\rv{X}}), \mech{V}(\tup{\rv{X}}'))]
&\leq
\Ex_{\tup{\rv{X}},\tup{\rv{X}}'} [\TV_{|\tup{\rv{X}},\tup{\rv{X}}'}(\mech{V}(\tup{\rv{X}}), \mech{V}(\tup{\rv{X}}'')) + \TV_{|\tup{\rv{X}},\tup{\rv{X}}'}(\mech{V}(\tup{\rv{X}}''), \mech{V}(\tup{\rv{X}}'))]
\\
&=
\Ex_{\tup{\rv{X}}} [\TV_{|\tup{\rv{X}}}(\mech{V}(\tup{\rv{X}}), \mech{V}(\tup{\rv{X}}''))]
+
\Ex_{\tup{\rv{X}'}} [\TV_{|\tup{\rv{X}}'}(\mech{V}(\tup{\rv{X}}''), \mech{V}(\tup{\rv{X}}'))]
\\
&=
2 \Ex_{\tup{\rv{X}}} [\TV_{|\tup{\rv{X}}}(\mech{V}(\tup{\rv{X}}), \mech{V}(\tup{\rv{X}}'))] \enspace.
\end{align*} 

Next we observe that $\tup{\rv{V}} = \mech{V}(\tup{\rv{X}}')$ has uniform distribution over the tuples in $\mathbb{G}^{kn}$ that add up to $\sum_i \rv{X}_i$.
We use this information to expand the total variation and write it as an expectation over $\tup{\rv{V}}$ as follows:
\begin{align*}
2 \TV_{|\tup{\rv{X}}}(\mech{V}(\tup{\rv{X}}), \mech{V}(\tup{\rv{X}}'))
&=
2 \TV_{|\tup{\rv{X}}}(\mech{V}(\tup{\rv{X}}), \tup{\rv{V}})
\\
&=
\sum_{\tup{v} \in \mathbb{G}^{kn}} | \Pr_{|\tup{\rv{X}}}[\mech{V}(\tup{\rv{X}}) = \tup{v}] - \Pr[\tup{\rv{V}} = \tup{v}]|
\\
&=
\sum_{\tup{v} \in \mathbb{G}^{kn} : \sum \tup{v} = \sum \tup{\rv{X}}} | \Pr_{|\tup{\rv{X}}}[\mech{V}(\tup{\rv{X}}) = \tup{v}] - m^{1-kn}|
\\
&=
m^{kn-1} \Ex_{\tup{\rv{V}}} [|\Pr_{|\tup{\rv{X}},\tup{\rv{V}}}[\mech{V}(\tup{\rv{X}}) = \tup{\rv{V}}] - m^{1-kn}|] \enspace.
\end{align*}

The final task is to bound the remaining expectation.
We start by defining the random variable $\rv{Z} = \rv{Z}(\trv{X},\trv{V}) := \Pr_{|\tup{\rv{X}},\tup{\rv{V}}}[\mech{V}(\tup{\rv{X}}) = \tup{\rv{V}}]$.
Note that because both $\mech{V}(\tup{\rv{X}})$ and $\tup{\rv{V}}$ follow the same uniform distribution over tuples in $\mathbb{G}^{kn}$ conditioned to having the same sum, we have
\begin{align*}
\Ex_{\trv{X},\trv{V}}[\rv{Z}] = \Pr[\mech{V}(\tup{\rv{X}}) = \tup{\rv{V}}] = m^{1-kn} \enspace.
\end{align*}
Therefore, the expectation that we need to bound takes the simple form $\Ex[|\rv{Z} - \Ex[\rv{Z}]|]$, and can be bounded in terms of $\Ex[\rv{Z}^2]$ via Jensen's inequality:
\begin{align*}
\Ex[|\rv{Z} - \Ex[\rv{Z}]|]
\leq
\sqrt{\Var[\rv{Z}]}
=
\sqrt{\Ex[\rv{Z}^2] - \Ex[\rv{Z}]^2} \enspace.
\end{align*}

Now recall that if $\rv{A}, \rv{A}' \in A$ are i.i.d. random variables, then we have
\begin{align*}
\Pr[\rv{A} = \rv{A}'] = \sum_{a \in A} \Pr[\rv{A} = a]^2 \enspace.
\end{align*}
Using this identity we can write the expectation of $\rv{Z}^2$ over the randomness in $\trv{V}$ in terms of the probability that two independent executions of $\mech{V}(\trv{X})$ (conditioned on $\trv{X})$ yield the same result:
\begin{align*}
\Ex_{\trv{V}}[\rv{Z}^2]
&=
m^{1-kn} \sum_{\tup{v} \in \mathbb{G}^{kn} : \sum \tup{v} = \sum \tup{\rv{X}}}
\Pr_{|\tup{\rv{X}}}[\mech{V}(\tup{\rv{X}}) = \tup{v}]^2
\\
&=
m^{1-kn} \Pr_{|\tup{\rv{X}}}[\mech{V}(\tup{\rv{X}}) = \mech{V}'(\tup{\rv{X}})]
\enspace.
\end{align*}
Putting the pieces together completes the proof:
\begin{align*}
\Ex_{\tup{\rv{X}},\tup{\rv{X}}'} [\TV_{|\tup{\rv{X}},\tup{\rv{X}}'}(\mech{V}(\tup{\rv{X}}), \mech{V}(\tup{\rv{X}}'))]
&\leq
2 \Ex_{\tup{\rv{X}}} [\TV_{|\tup{\rv{X}}}(\mech{V}(\tup{\rv{X}}), \mech{V}(\tup{\rv{X}}'))] \\
&\leq
m^{kn-1} \Ex_{\tup{\rv{X}}, \tup{\rv{V}}} [|\Pr_{|\tup{\rv{X}},\tup{\rv{V}}}[\mech{V}(\tup{\rv{X}}) = \tup{\rv{V}}] - m^{1-kn}|] \\
&\leq
\sqrt{m^{kn-1} \Pr[\mech{V}(\tup{\rv{X}}) = \mech{V}'(\tup{\rv{X}})] - 1} \enspace.
\end{align*}
\end{proof}

To further simplify the bound in previous lemma we can write the probability $\Pr[\mech{V}_{k,n}(\tup{\rv{X}}) = \mech{V}_{k,n}'(\tup{\rv{X}})]$ in terms of a single permutation step.
For that purpose we introduce the notation $\mech{V}_{k,n} = \mech{S}_{k,n} \circ \tup{\mech{R}}_{k,n}$, where:
\begin{itemize}
\item $\tup{\mech{R}}_{k,n} : \mathbb{G}^n \to \mathbb{G}^{n k}$ is the randomized map that given $\tup{x} = (x_1,\ldots,x_n)$ generates the shares $(\rv{Y}_i^{(1)}, \ldots, \rv{Y}_i^{(k)}) = \mech{R}(x_i)$ and arranges them in order first by share id and then by user:
\begin{align*}
\tup{\mech{R}}_{k,n}(\tup{x}) = (\rv{Y}_1^{(1)}, \ldots, \rv{Y}_n^{(1)}, \ldots, \rv{Y}_1^{(k)}, \ldots, \rv{Y}_n^{(k)}) \enspace.
\end{align*}
\item $\mech{S}_{k,n} : \mathbb{G}^{nk} \to \mathbb{G}^{nk}$ is a random permutation of its inputs obtained by applying $k$ independent shufflers $\mech{S}^{(j)}$, $j \in [k]$, to the inputs in blocks of $n$:
\begin{align*}
\mech{S}_{k,n}(y_1^{(1)}, \ldots, y_n^{(1)}, \ldots, y_1^{(k)}, \ldots, y_n^{(k)})
= (\mech{S}^{(1)}(y_1^{(1)}, \ldots, y_n^{(1)}) \cdots \mech{S}^{(k)}(y_1^{(k)}, \ldots, y_n^{(k)}))
\end{align*}
\end{itemize}
It is important to note that $\mech{S}_{k,n}$ produces random permutations of $[kn]$ which are uniformly distributed in the subgroup of all permutations which arise as the parallel composition of $k$ uniform permutations on $[n]$.
Equipped with these observations, it is straightforward to verify the following identity.

\begin{lemma}
Let $\tup{\mech{R}}_{k,n}$ and $\tup{\mech{R}}_{k,n}'$ denote two independent executions of the additive sharing step in $\mech{V}_{k,n}(\tup{\rv{X}}) = \mech{S}_{k,n} \circ \tup{\mech{R}}_{k,n}$.
Then we have
\begin{align*}
\Pr[\mech{V}_{k,n}(\tup{\rv{X}}) = \mech{V}'_{k,n}(\tup{\rv{X}})]
&=
\Pr[\tup{\mech{R}}_{k,n} (\tup{\rv{X}}) = \mech{S}_{k,n} \circ \tup{\mech{R}}_{k,n}' (\tup{\rv{X}})] \enspace.
\end{align*}
\end{lemma}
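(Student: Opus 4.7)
The plan is to exploit the group structure of the shuffling step: the map $\mech{S}_{k,n}$ is a uniform draw from the subgroup $H \leq S_{kn}$ consisting of block-diagonal parallel products of $k$ permutations of $[n]$, and in any group the product of two independent uniform elements is again uniform. I would carry out the argument in three short steps.

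First, I would unfold both sides using the decomposition $\mech{V}_{k,n} = \mech{S}_{k,n} \circ \tup{\mech{R}}_{k,n}$, with independent primed and unprimed copies of the randomness. The left-hand side becomes $\Pr[\mech{S}_{k,n} \circ \tup{\mech{R}}_{k,n}(\trv{X}) = \mech{S}'_{k,n} \circ \tup{\mech{R}}'_{k,n}(\trv{X})]$. Since each $\mech{S}_{k,n}$ is a bijection on $\mathbb{G}^{nk}$, I can apply $(\mech{S}'_{k,n})^{-1}$ to both sides of the equality inside the probability without changing the event, obtaining $\Pr[(\mech{S}'_{k,n})^{-1} \circ \mech{S}_{k,n} \circ \tup{\mech{R}}_{k,n}(\trv{X}) = \tup{\mech{R}}'_{k,n}(\trv{X})]$.

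Next I would argue that the composite $(\mech{S}'_{k,n})^{-1} \circ \mech{S}_{k,n}$ is again uniformly distributed over $H$, and is independent of the sharing randomness $(\tup{\mech{R}}_{k,n}, \tup{\mech{R}}'_{k,n})$. Uniformity follows because $H$ is closed under inversion and composition, and for i.i.d.\ uniform $\mech{S}, \mech{S}' \in H$ the map $h \mapsto (\mech{S}')^{-1} h$ is a bijection of $H$ for each fixed $\mech{S}'$, so the product is uniform; independence from the sharing step follows because the shuffling randomness is independent of the sharing randomness by construction. Replacing $(\mech{S}'_{k,n})^{-1} \circ \mech{S}_{k,n}$ by a fresh independent copy of $\mech{S}_{k,n}$ therefore leaves the probability unchanged.

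Finally, using the symmetry of equality (or equivalently applying the inverse of this fresh shuffler once more), I would rewrite the resulting expression as $\Pr[\tup{\mech{R}}_{k,n}(\trv{X}) = \mech{S}_{k,n} \circ \tup{\mech{R}}'_{k,n}(\trv{X})]$, which is exactly the right-hand side. I do not anticipate any real obstacle here: the only substantive point is verifying that the parallel shufflers do form a subgroup (so that the "translation trick" for uniform draws applies), and this is immediate from the definition of $\mech{S}_{k,n}$ as an element of $S_n^k$ acting block-wise on $[kn]$.
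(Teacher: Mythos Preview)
Your proposal is correct and follows essentially the same approach as the paper's proof: both exploit that $\mech{S}_{k,n}$ is uniform over a subgroup of $S_{kn}$, apply an inverse shuffler to move one shuffling step across the equality, and then use that the resulting composite is again a single uniform shuffler independent of the sharing randomness. The only cosmetic difference is that the paper applies $\mech{S}^{-1}$ on the left (yielding $\mech{S}^{-1}\circ\mech{S}'$ on the right in one step), whereas you apply $(\mech{S}')^{-1}$ and then need a final symmetry/inverse step to swap the roles of $\tup{\mech{R}}$ and $\tup{\mech{R}}'$; this is harmless since they are i.i.d.
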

\begin{proof}
We drop all subscripts for convenience.
The result follows directly from the fact that $\mech{S}$ is uniform over a subgroup of permutations, which implies that the inverse of $\mech{S}$ and the composition of two independent copies of $\mech{S}$ both follow the same distribution as $\mech{S}$.
Thus, we can write:
\begin{align*}
\Pr[\mech{V}(\tup{\rv{X}}) = \mech{V}'(\tup{\rv{X}})]
&=
\Pr[\mech{S} \circ \tup{\mech{R}} (\tup{\rv{X}}) = \mech{S}' \circ \tup{\mech{R}}' (\tup{\rv{X}})]
\\
&=
\Pr[\tup{\mech{R}} (\tup{\rv{X}}) = \mech{S}^{-1} \circ \mech{S}' \circ \tup{\mech{R}}' (\tup{\rv{X}})]
\\
&=
\Pr[\tup{\mech{R}} (\tup{\rv{X}}) = \mech{S} \circ \tup{\mech{R}}' (\tup{\rv{X}})] \enspace.
\end{align*}
\end{proof}

Putting these two lemmas together yields the following bound.

\begin{lemma}\label{lem:single_shuffling}
Let $\mech{V}_{k,n}$ and $\mech{V}_{k,n}'$ denote two independent executions of the $k$-parallel IKOS protocol. Then we have:
\begin{align*}
\Ex_{\tup{\rv{X}},\tup{\rv{X}}'} [\TV_{|\tup{\rv{X}},\tup{\rv{X}}'}(\mech{V}_{k,n}(\tup{\rv{X}}), \mech{V}_{k,n}(\tup{\rv{X}}'))]
&\leq
\sqrt{m^{kn-1} \Pr[\tup{\mech{R}}_{k,n} (\tup{\rv{X}}) = \mech{S}_{k,n} \circ \tup{\mech{R}}_{k,n}' (\tup{\rv{X}})] - 1} \enspace.
\end{align*}  
\end{lemma}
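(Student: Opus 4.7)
The plan is to observe that Lemma~\ref{lem:single_shuffling} is an immediate consequence of the two preceding lemmas proved in this subsection, so no new machinery is needed—only a direct substitution. The first of those lemmas bounds the expected conditional total variation distance between the outputs of two independent executions of $\mech{V}_{k,n}$ on random inputs by the quantity
\[
\sqrt{m^{kn-1}\,\Pr[\mech{V}_{k,n}(\trv{X}) = \mech{V}_{k,n}'(\trv{X})] - 1},
\]
while the second lemma rewrites the collision probability that appears under the square root as
\[
\Pr[\mech{V}_{k,n}(\trv{X}) = \mech{V}_{k,n}'(\trv{X})] = \Pr[\tup{\mech{R}}_{k,n}(\trv{X}) = \mech{S}_{k,n}\circ\tup{\mech{R}}_{k,n}'(\trv{X})].
\]

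First I would restate these two conclusions in a single sentence, invoking them by reference. Then I would plug the identity from the second lemma into the bound of the first lemma, noting that this substitution preserves the inequality since only the argument inside the square root is being rewritten as an equality. The claim of Lemma~\ref{lem:single_shuffling} then follows verbatim.

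There is no real obstacle here, since all the work was already done upstream: the expected-total-variation-to-collision-probability reduction (via an independent copy of $\trv{X}'$, a triangle inequality, and Jensen applied to $|\rv{Z}-\Ex[\rv{Z}]|$) is carried out in the first lemma, and the collapse of two independent shuffles into a single shuffle (using that $\mech{S}_{k,n}$ is uniform over a subgroup of $S_{kn}$, hence closed under inversion and composition) is carried out in the second. The only minor thing to be careful about is that both prior lemmas are stated with $\mech{V}_{k,n}$ and its primed copy drawn with fresh internal randomness; since the chain of (in)equalities is composed at the level of scalar quantities (a probability substituted inside a square root), no independence assumption beyond what each lemma already uses is required. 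Thus the proof is a one-line combination.
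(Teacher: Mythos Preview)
Your proposal is correct and matches the paper's approach exactly: the paper introduces Lemma~\ref{lem:single_shuffling} with the sentence ``Putting these two lemmas together yields the following bound'' and gives no further proof. Your one-line substitution of the second lemma's identity into the first lemma's bound is precisely what is intended.
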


\subsection{Reduction to a problem on random graphs}
\label{sec:graph_argument}

\begin{definition}
  A random $n$-vertex $2k$-regular multigraph $G$ is from the \emph{unconditioned permutation model}, denoted $R^*(n,2k)$, if it can be constructed as follows. Start with $n$ vertices and no edges. Take a set of $k$ uniformly random and independent permutations of the vertices, denoted $\{\pi_i\}_{i=1}^k$.  For each vertex $v$ and each index $i\in [k]$, add an edge between $v$ and $\pi_i(v)$. We say that $G$ is generated by $\{\pi_i\}_{i=1}^k$.
\end{definition}

Note that $G$ may have self-loops. Let $C(G)$ be number of connected components of a graph $G$.

\begin{lemma}
  \label{lem:graph_argument}
  Let $G$ be drawn from $R^*(n,2k)$, then
  \begin{equation*}
  \Pr[\tup{\mech{R}} (\tup{\rv{X}}) = \mech{S} \circ \tup{\mech{R}}' (\tup{\rv{X}})]\leq \Ex[m^{C(G)-kn}]
  \end{equation*}
\end{lemma}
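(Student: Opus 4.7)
The plan is to condition on the random permutations $\pi_1,\ldots,\pi_k$ used by the shuffler $\mech{S}$, reduce each conditional probability to counting the kernel of a linear map over $\mathbb{G}$, and then recognize that map as the signed incidence operator of the multigraph $G_\pi$ generated by the $\pi_j$'s via the $R^*(n,2k)$ construction. Writing $\rv{Y}^{(j)}_i$ and $\rv{Z}^{(j)}_i$ for the outputs of $\tup{\mech{R}}$ and $\tup{\mech{R}}'$ on $\trv{X}$, and using that $\mech{S}$ applies $k$ independent uniform permutations (one per block), one expands
\begin{equation*}
\Pr[\tup{\mech{R}}(\trv{X}) = \mech{S}\circ\tup{\mech{R}}'(\trv{X})] \;=\; \frac{1}{(n!)^k}\sum_{\pi_1,\ldots,\pi_k}\Pr[\mathcal{E}_\pi],
\end{equation*}
where $\mathcal{E}_\pi$ is the event $\rv{Y}^{(j)}_i=\rv{Z}^{(j)}_{\pi_j(i)}$ for all $i\in[n]$, $j\in[k]$.

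Next I would write down the joint distribution of $(\rv{Y},\rv{Z})$: because $\trv{X}$ is uniform over $\mathbb{G}^n$ and each additive sharing step produces uniform shares of a uniform sum, $(\rv{Y},\rv{Z})$ is uniform on the subgroup $H\subseteq\mathbb{G}^{2kn}$ cut out by the $n$ consistency constraints $\sum_j \rv{Y}^{(j)}_i=\sum_j \rv{Z}^{(j)}_i$, which has order $m^{(2k-1)n}$. Conditioning on $\mathcal{E}_\pi$ eliminates the $\rv{Z}$ variables via $\rv{Z}^{(j)}_i=\rv{Y}^{(j)}_{\pi_j^{-1}(i)}$ and turns the consistency constraints into
\begin{equation*}
\sum_{j=1}^k \rv{Y}^{(j)}_i \;-\; \sum_{j=1}^k \rv{Y}^{(j)}_{\pi_j^{-1}(i)} \;=\; 0 \qquad \forall\,i\in[n],
\end{equation*}
so $\Pr[\mathcal{E}_\pi]$ equals the number of $\rv{Y}\in\mathbb{G}^{kn}$ solving this system, divided by $m^{(2k-1)n}$.

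The key observation is that the coefficient of $\rv{Y}^{(j)}_v$ in the equation at vertex $w$ is $+1$ if $w=v$, $-1$ if $w=\pi_j(v)$, and $0$ otherwise; this is exactly the signed incidence matrix of the directed multigraph on $[n]$ with edge $v\to\pi_j(v)$ for every $(v,j)$, whose underlying undirected multigraph is $G_\pi$ (with self-loops correctly contributing zero columns and each pair giving a multi-edge). Since the incidence matrix of any graph is totally unimodular, its Smith normal form over $\mathbb{Z}$ has exactly $n-C(G_\pi)$ unit invariant factors; hence its kernel inside $\mathbb{G}^{kn}$ has order $m^{kn-n+C(G_\pi)}$, giving $\Pr[\mathcal{E}_\pi]=m^{C(G_\pi)-kn}$. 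Averaging over $\pi$ is the same as averaging over $G\sim R^*(n,2k)$, so
\begin{equation*}
\Pr[\tup{\mech{R}}(\trv{X})=\mech{S}\circ\tup{\mech{R}}'(\trv{X})]=\Ex[m^{C(G)-kn}]
\end{equation*}
with equality, which implies the stated inequality.

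The main obstacle is justifying the kernel-size computation for an arbitrary abelian group $\mathbb{G}$ of order $m$, rather than $\mathbb{Z}_m$. I would handle this by decomposing $\mathbb{G}$ into cyclic factors via the structure theorem and applying the single-cyclic case to each factor, or equivalently by choosing a spanning forest of $G_\pi$ and using its non-tree edges to build an explicit $\mathbb{G}$-linear isomorphism between $\mathbb{G}^{kn-n+C(G_\pi)}$ and the kernel. The only other bookkeeping point is to verify that multi-edges and self-loops do not disturb the rank formula $n-C(G_\pi)$, which follows from the fact that self-loops contribute identically zero columns and parallel edges contribute identical columns, neither of which changes the column space.
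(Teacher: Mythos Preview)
Your proof is correct and in fact establishes equality, matching what the paper obtains. The route, however, is genuinely different. The paper conditions on $\mech{S}$ and computes $\Pr[\tup{\mech{R}}(\tup{\rv{X}})=\mech{S}\circ\tup{\mech{R}}'(\tup{\rv{X}})\mid\mech{S}]$ by a \emph{sequential} argument: it orders the $kn$ coordinate-wise equalities as events $A_1,\ldots,A_{kn}$, evaluates each $p_j=\Pr[A_j\mid A_1,\ldots,A_{j-1},\mech{S}]$ by case analysis, and handles the delicate case (the last share of a user whose vertex is not maximal in its component of $G_{\mech{S}}$) via an explicit $\mathbb{G}$-action built from a path in $G_{\mech{S}}$ to a higher-index vertex. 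Your argument is instead \emph{global}: you recognise the collision event, after eliminating $\rv{Z}$, as the kernel of the signed incidence map of $G_\pi$ acting on $\mathbb{G}^{kn}$, and read off its size from total unimodularity and Smith normal form. Your approach is shorter and more structural once those linear-algebra facts are in hand; the paper's is more elementary and self-contained, and has the incidental feature that its path construction already works for each fixed $\tup{x}$, whereas you use the uniformity of $\tup{\rv{X}}$ to make the system homogeneous (harmless here, since that is the distribution in play). Your closing worry about general abelian $\mathbb{G}$ is in fact already handled by your Smith-normal-form step, since unimodular base changes induce automorphisms of $\mathbb{G}^{kn}$ for any abelian $\mathbb{G}$; the spanning-forest parametrisation you sketch is an equally valid alternative.
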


\begin{proof}
  Note that, by the tower law,
  \begin{align*}
    \Pr[\tup{\mech{R}} (\tup{\rv{X}}) = \mech{S} \circ \tup{\mech{R}}' (\tup{\rv{X}})]=\Ex[\Pr[\tup{\mech{R}} (\tup{\rv{X}}) = \mech{S} \circ \tup{\mech{R}}' (\tup{\rv{X}})|\mech{S}]].
  \end{align*}
  Let $G_{\mech{S}}$ be the unconditioned permutation model graph, with vertex set $[n]$, generated by the $k$ permutations used in $\mech{S}$. Note that, it suffices to show that
  \begin{equation*}
    \Pr[\tup{\mech{R}} (\tup{\rv{X}}) = \mech{S} \circ \tup{\mech{R}}' (\tup{\rv{X}})|\mech{S}]=m^{C(G_{\mech{S}})-kn}.
  \end{equation*}

  For notational convenience, we will define a deterministic reordering of $\tup{\mech{R}} (\tup{\rv{X}})$ and  $\mech{S} \circ \tup{\mech{R}}' (\tup{\rv{X}})$ as follows. Consider the permutation $P:[kn]\rightarrow [kn]$
  \begin{equation*}
    P(j)=\left\lfloor \frac{j-1}{k}\right\rfloor+n(j-1 \textrm{ mod } k)+1.
  \end{equation*}

  Define $\rv{U},\rv{U}'\in \mathbb{G}^{kn}$ by $\rv{U}_j=\tup{\mech{R}} (\tup{\rv{X}})_{P(j)}$ and $\rv{U}'_j=P\circ \mech{S} \circ \tup{\mech{R}}' (\tup{\rv{X}})_{P(j)}$. Note that $P$ is such that the shares from each input are grouped together (in order) in $\rv{U}$. Consequently, $\rv{U}'$ groups together collections of $k$ shares, one from the output of each shuffler. Thus it suffices to show that
  \begin{equation*}
    \Pr[ \rv{U}=\rv{U}' |\mech{S}]=m^{C(G_{\mech{S}})-kn}.
  \end{equation*}
  
  For $j\in [kn]$, let $A_j$ be the event that $\rv{U}_j=\rv{U}'_j$. Now define $p_j:=\Pr[A_j|A_1,...,A_{j-1},\mech{S}]$, thus
  \begin{equation*}
    \Pr[ \rv{U}=\rv{U}' |\mech{S}]=\prod_{j=1}^{kn} p_j.
  \end{equation*}

  First we consider values of $j$ that are not divisible by $k$, i.e. they are not the final share in a group of $k$. For such a $j$, we claim $p_j=m^{-1}$. To see this, condition on $\tup{\rv{X}}$ and $\tup{\mech{R}}'$, in addition to $A_1,...,A_{j-1}$. Note that $\rv{U}_j$ and $\rv{U}_{j+1}$ only depend upon anything we've conditioned on via their sum. Therefore $\rv{U}_j$ is still uniformly distributed and has probability $m^{-1}$ of being equal to $\rv{U}'_j$.

For an index $i\in [kn]$ we define the vertex corresponding to $i$ to be the vertex $\lceil i/k \rceil$, and we define $C_i$ to be the set of vertices in the same connected component as this vertex in $G_{\mech{S}}$. For the remaining $j$'s, we distinguish the case where the corresponding vertex is the highest index in $C_j$ and the case where it isn't.

In the first case,
\begin{equation*}
  \sum_{i \textrm{ s.t. } C_i=C_j}\rv{U}'_i=\sum_{i \textrm{ s.t. } C_i=C_j}\tup{\mech{R}}' (\tup{\rv{X}})_{P(i)}
\end{equation*}
as the sums have the same summands in a different order. Further,
\begin{equation*}
  \sum_{i \textrm{ s.t. } C_i=C_j}\tup{\mech{R}}' (\tup{\rv{X}})_{P(i)}=\sum_{i \textrm{ s.t. } C_i=C_j}\rv{U}_i
\end{equation*}
as they both represent sharings of the same input values and
\begin{equation*}
  \sum_{\substack{i \textrm{ s.t. } C_i=C_j \\ i\neq j}}\rv{U}'_i=\sum_{\substack{i \textrm{ s.t. } C_i=C_j \\ i\neq j}}\rv{U}_i
\end{equation*}
as we are conditioning on $A_1,...,A_{j-1}$. Putting these together we can conclude that $p_j=1$.

For the second case, we will find that $p_j=m^{-1}$. We will show this by showing that if we condition on the value of $\rv{U}_j$ then $\rv{U}'_j$ is still uniformly distributed. That is to say that the number of possible outcomes fitting those conditions with each value of $\rv{U}_j$ is independent of that value. To show that these sets of outcomes have the same size we will partition the possible outcomes into sets of size $m$, with $\rv{U}'_j$ taking each value in $\mathbb{G}$ exactly once in each set. This will be possible because the structure of $G_{\mech{S}}$ allows us to change the value of $\rv{U}'_j$ and other values to preserve what is being conditioned on in an algebraically principled way. If $\mathbb{G}=\mathbb{Z}_m$, for some prime $m$, i.e. the set of possible outcomes forms a vector space, this can be thought of as follows. The space of possible outcomes consitent with the conditions is a subspace of the space of all outcomes. Thus showing that this subspace contains two possible values for $\rv{U}'_j$ suffices by the nice algebraic properties of vector spaces. That there is more than one possible value of $\rv{U}'_j$ is a consequence of $G_{\mech{S}}$ ``connecting the $j$th share to later shares''. The following paragraphs make this formal in the more general setting of any abelian group $\mathbb{G}$.

Consider the set $\mathcal{T}$ of choices of $(\rv{U}\cdot \rv{U}')\in \mathbb{G}^{2kn}$ that are consistent with $A_1,...,A_{j-1}$ (and a value of $\tup{\rv{X}}$). We consider the group action of $\mathbb{G}^{2kn}$ on itself by addition. We will show that, there exists a homomorphism $\mathbb{G}\rightarrow \mathbb{G}^{2kn}$ mapping $g$ to $u_g$ with the following property.  The action of $u_g$ on $\mathbb{G}^{2kn}$ fixes $\mathcal{T}$ and $\rv{U}_j$ and adds $g$ to $\rv{U}'_j$. Therefore, the equivalence relation, of being equal upto adding $u_g$ for some $g$, partitions $\mathcal{T}$ into subsets of size $m$ each containing one value for which $A_j$ holds. It follows, from the fact that each entry in $\mathcal{T}$ is equally likely, that $p_j=m^{-1}$.

To find such a homomorphism, note that there is a path in $G_S$ from the vertex corresponding to the $j$th share to a higher index vertex. This is equivalent to saying that there is a sequence $(a_1,b_1,a_2,b_2,...,a_l,b_l,a_{l+1})$ with the following properties. The $a_i$ and $b_i$ are elements of $[kn]$ and should be interpreted as indexes of $\mathbb{G}^{kn}$. For all $i\in [l]$, $\pi(b_i)=a_i$ and $b_i$ and $a_{i+1}$ correspond to the same vertex. We have $a_1=j$, $b_l>j$, $a_i\neq a_{i'}$ for any $i\neq i'$, and $b_i< j$ for all $i<l$. Let $u_g$ be the element of $\mathbb{G}^{2kn}$ with a $g$ in entries $a_2,...,a_{l+1},b_1+kn,...,b_{l}+kn$ and the identity everywhere else.

Adding $u_g$ doesn't change the truth of $A_1,...,A_{j-1}$ because $\rv{U}_{a_i}$ and $\rv{U}'_{a_i}=\tup{\mech{R}}'(\tup{\rv{X}})_{b_i}$ are always incremented together, with the exception of when $i=1$ or $l+1$ which is fine because then $a_i\geq j$. In the case of $i=1$ this adds $g$ to $\rv{U}'_j$ without changing $\rv{U}_j$. The consistency of the implied values of $\tup{\rv{X}}$ is maintained because $\rv{U}_{a_i}$ and $\tup{\mech{R}}'(\tup{\rv{X}})_{P(b_{i-1})}$ are always incremented together and affect the $\tup{\rv{X}}$ implied by $\rv{U}$ the same as that implied by $\tup{\mech{R}}'(\tup{\rv{X}})$. Thus, this $u_g$ has the properties we claimed and $p_j=m^{-1}$.

Tying this together we have that
\begin{align*}
\Pr[\rv{U} = \rv{U}']&=\Ex[\Pr[\rv{U}=\rv{U}'|\mech{S}]] \\
                                                                                  &=\Ex[\prod_{j=1}^{kn} p_j] \\
                                                                                  &=\Ex[m^{C(G_S)-kn}] \\
                                                                                  &=\Ex[m^{C(G)-kn}]\enspace.
\end{align*}
\end{proof}
 
\subsection{Understanding the number of connected components of \texorpdfstring{$G$}{G}}
\label{sec:bound_on_graph}

\begin{lemma}
  \label{lem:bound_on_graph}
   Let $n\geq 19$, $k\geq 3$ and $m\leq \frac{1}{2}\left(\frac{n}{e}\right)^{k-1}$. Let $G$ be drawn from $R^*(n,2k)$, then
  \begin{equation*}
    \PP(C(G)=c)\leq \frac{1.5^{c-1}}{c!}\left(\frac{e}{n}\right)^{(k-1)(c-1)}.
  \end{equation*}
  Therefore,
  \begin{align*}
    \Ex[m^{C(G)}]&\leq m + m^2\left(\frac{n}{e}\right)^{1-k} \enspace.
  \end{align*}
\end{lemma}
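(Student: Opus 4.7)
My plan is to bound $\PP(C(G)=c)$ by a union bound over invariant vertex partitions, estimate the resulting composition sum, and then sum the resulting tail inequality to bound $\Ex[m^{C(G)}]$.

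\textbf{Step 1 (Union bound).} If $C(G)=c$ then the partition of $[n]$ into connected components has exactly $c$ blocks and is preserved setwise by each of the $k$ generating permutations. Since a uniform permutation of $[n]$ preserves a fixed partition with block sizes $(s_1,\ldots,s_c)$ with probability $\prod_j s_j!/n!$, and the $k$ permutations are independent, a union bound over unordered $c$-partitions of $[n]$ grouped by size profile gives
\begin{equation*}
\PP(C(G)=c)\;\leq\; \frac{1}{c!\,(n!)^{k-1}} \sum_{\substack{(s_1,\ldots,s_c)\in\N^c,\,s_j\geq 1\\ s_1+\cdots+s_c=n}}\Bigl(\prod_{j=1}^c s_j!\Bigr)^{k-1}.
\end{equation*}

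\textbf{Step 2 (Composition sum).} I claim the maximum of $\prod_j s_j!$ over compositions of $n$ into $c$ positive parts occurs at $(n-c+1,1,\ldots,1)$ and its $c$ cyclic permutations: moving one unit from any block of size $\geq 2$ to the largest block strictly increases the product, so a swap argument shows these $c$ compositions dominate and together contribute $c\,((n-c+1)!)^{k-1}$. For the remaining compositions I would bucket them by the multiset of sizes exceeding $1$ and show geometrically that each additional non-singleton block costs a factor of at most $(e/n)^{k-1}$ relative to the dominant one, producing a total bound of the form $1.5^{c-1}\,c\,((n-c+1)!)^{k-1}$. Combined with the Stirling-type estimate $n(n-1)\cdots(n-c+2) \geq (n/e)^{c-1}$ (checked under $n\geq 19$ via $\log(1-x)\geq -x/(1-x)$), this produces $\PP(C(G)=c)\leq \frac{1.5^{c-1}}{c!}(e/n)^{(k-1)(c-1)}$. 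Extracting the explicit constant $1.5^{c-1}$ rather than some unquantified $A^{c-1}$ is the main technical obstacle, and I expect it will require careful bookkeeping of how many compositions lie in each bucket, exploiting the numerical slack afforded by $k\geq 3$ and $n\geq 19$.

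\textbf{Step 3 (Expectation).} Let $\beta := m(e/n)^{k-1}$, so the hypothesis $m\leq \tfrac{1}{2}(n/e)^{k-1}$ gives $1.5\beta \leq 3/4$. Plugging the Step~2 tail bound into $\Ex[m^{C(G)}] = \sum_{c\geq 1} m^c\PP(C(G)=c)$ and recognising the result as a shifted exponential series yields
\begin{equation*}
\Ex[m^{C(G)}] \;\leq\; m\cdot \frac{e^{1.5\beta}-1}{1.5\beta}.
\end{equation*}
Applying the elementary inequality $e^y - 1\leq y + \tfrac{2}{3}y^2$ for $y\leq 3/4$ (verified by direct Taylor evaluation: $e^{3/4}\approx 2.117 \leq 1 + 0.75 + 0.375$) then collapses this bound to $m(1+\beta) = m + m^2(n/e)^{1-k}$, completing the argument.
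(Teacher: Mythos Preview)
Your Step~1 union bound is correct and in fact coincides with what the paper obtains after fully unrolling its induction: both reduce to bounding $\frac{1}{c!\,(n!)^{k-1}}\sum_{(s_1,\ldots,s_c)}(\prod_j s_j!)^{k-1}$. The paper, however, does not attack this $c$-fold sum directly; it proves the probability bound by induction on $c$, so that each inductive step only requires controlling a single-variable sum $\sum_{s}\binom{n}{s}^{1-k}(n/(n-s))^{(k-1)(c-2)}$, which it splits into three ranges and bounds with geometric-series and Stirling arguments. Your direct-sum plan is a legitimate alternative, but the bucketing sketch in Step~2 is not carried out, and more importantly it contains a bookkeeping error that breaks Step~3.

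Concretely: your target bound on the composition sum is $1.5^{c-1}\,c\,((n-c+1)!)^{k-1}$. Plugging this into the Step~1 inequality and using $n(n-1)\cdots(n-c+2)\geq (n/e)^{c-1}$ yields
\[
\PP(C(G)=c)\;\leq\;\frac{1.5^{c-1}\,c}{c!}\Bigl(\frac{(n-c+1)!}{n!}\Bigr)^{k-1}\;\leq\;\frac{1.5^{c-1}}{(c-1)!}\Bigl(\frac{e}{n}\Bigr)^{(k-1)(c-1)},
\]
not $\frac{1.5^{c-1}}{c!}(\cdot)$ as you wrote: the factor $c$ does not disappear. With $(c-1)!$ in the denominator, your Step~3 series becomes $\sum_{c\geq 1}\frac{(1.5\beta)^{c-1}}{(c-1)!}=e^{1.5\beta}$, giving $\Ex[m^{C(G)}]\leq m\,e^{1.5\beta}$, and since $e^{1.5\beta}\geq 1+1.5\beta>1+\beta$ this is strictly larger than the claimed $m+m^2(n/e)^{1-k}$. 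You cannot repair this just by tightening the composition-sum bound to drop the $c$, because the $c$ maximising compositions alone already contribute $c\,((n-c+1)!)^{k-1}$. The slack would have to come from sharpening the Stirling replacement so that $(n-c+1)!/n!\leq c^{-1/(k-1)}(e/n)^{c-1}$, i.e.\ absorbing the extra $c$ into the falling-factorial estimate; this is plausible for $k\geq 3$, $n\geq 19$ but is a separate lemma you have not stated. (Relatedly, your verification of the Stirling estimate via $\log(1-x)\geq -x/(1-x)$ only covers $c\leq n/2+2$; for larger $c$ the term-by-term bound fails and you need Stirling's formula proper.) The paper's inductive route avoids all of this: the $1/c$ in the recursion accumulates to $1/c!$ automatically, and the $(e/n)^{k-1}$ factor is earned one step at a time against a one-dimensional sum.
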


\begin{proof}
  Let $p(n,c)=\Pr[C(G^n)=c]$, we will show that the bound in the theorem holds by induction on $c$. For $c=1$ the bound is trivial as the right hand side equals $1$. For $c>1$,
  \begin{align*}
    p(n,c)&=\frac{1}{c}\sum_{S\subset [n]}\Pr[\textrm{No edge from $S$ to $[n]-S$}]p(|S|,1)p(n-|S|,c-1) \\
          &=\frac{1}{c}\sum_{s=1}^{n-c+1}\binom{n}{s}\binom{n}{s}^{-k}p(s,1)p(n-s,c-1).
  \end{align*}
  We now bound this expression, using the induction hypothesis, to find that
  \begin{align}
    \label{eq:sum_to_bound}
    p(n,c)&\leq \frac{1}{c}\sum_{s=1}^{n-c+1}\binom{n}{s}^{1-k}\frac{1.5^{c-2}}{(c-1)!}\left(\frac{e}{n-s}\right)^{(k-1)(c-2)} \\
          &= \frac{1.5^{c-1}}{c!}\left(\frac{e}{n}\right)^{(k-1)(c-1)}\frac{2e^{1-k}}{3}\sum_{s=1}^{n-c+1}\left(\frac{(n-s)!s!n^{c-1}}{n!(n-s)^{c-2}}\right)^{k-1}.
  \end{align}
  To complete the proof it suffices to show that this sum on the right is at most $1.5e^{k-1}$. Call the $s$th summand from this sum $a_s$. We separate the summands into three cases, depending on whether $s$ is greater than $n/10$ and/or less than $3n/4$. Firsty, if $s\leq n/10$, then
  \begin{align*}
    \frac{a_{s}}{a_{s-1}}&=\left(\frac{s}{n-s+1}\left(\frac{n-s+1}{n-s}\right)^{c-2}\right)^{k-1} \\
                         &\leq \left(\frac{s}{n-s}e^{\frac{c-2}{n-s}}\right)^{k-1} \\
                         &\leq \left(\frac{e^{\frac{10}{9}}}{9}\right)^{2} \\
                         &\leq \frac{1}{8}.
  \end{align*}
  Thus we can bound the early summands with a geometric series as follows.
  \begin{align*}
    \sum_{s=1}^{\lfloor n/10 \rfloor}a_s&\leq \sum_{s=1}^{\lfloor (n-c)/10 \rfloor}\frac{a_1}{8^{s-1}} \\
                                            &\leq \sum_{s=1}^\infty \frac{a_1}{8^{s-1}} \\
                                            &\leq \frac{8a_1}{7} \\
                                            &=\frac{8}{7}\left(\frac{n}{n-1}\right)^{(c-2)(k-1)} \\
                                            &\leq \frac{8}{7}e^{\frac{(c-1)(k-1)}{n}}\leq \frac{8}{7}e^{(k-1)}
  \end{align*}

  We now similarly consider the terms with $s\geq 3n/4$. For these values of $s$,
  \begin{align*}
    \frac{a_{s+1}}{a_{s}}&=\left(\frac{s+1}{n-s}\left(\frac{n-s}{n-s-1}\right)^{c-2}\right)^{k-1} \\
                         &\geq\left(\frac{s}{n-s}\right)^{k-2} \\
                         &\geq 9
  \end{align*}
  If $c>n/4$ then there are no summands for $s\geq 3n/4$. Otherwise we can bound the late summands with a geometric series as follows.
  \begin{align*}
    \sum_{s=\lceil 3n/4 \rceil}^{n-c+1}a_s&\leq \sum_{s=\lceil 3n/4 \rceil}^{n-c+1}\frac{a_{n-c+1}}{9^{n-c+1-s}} \\
                                            &\leq \sum_{s=-\infty}^{n-c+1} \frac{a_{n-c+1}}{9^{n-c+1-s}} \\
                                            &= \frac{9a_{n-c+1}}{8} \\
                                            &=\frac{9}{8}\left(\frac{(c-1)!n^{c-1}(n-c+1)!}{n!(c-1)^{c-2}}\right)^{k-1}.
  \end{align*}
  Applying Sterling's bound, $\sqrt{2\pi}n^{n+\frac{1}{2}}e^{-n}\leq n!\leq e n^{n+\frac{1}{2}}e^{-n}$, to the factorials in the above expression bounds it above by,
  \begin{equation*}
    \frac{9}{8}\left(\frac{e^2}{\sqrt{2\pi}}(c-1)^{1.5}\left(1-\frac{c-1}{n}\right)^{n-c+1.5}\right)^{k-1}.
  \end{equation*}
  As $n\geq 19$ and $c\leq n/4$, this is maximised for $c=3$, and as we also have $k\geq 3$ this results in the bound
  \begin{equation*}
    \frac{9}{8}\left(\frac{e^2}{\sqrt{2\pi}}2\sqrt{2}(1-\frac{2}{n})^{n-1.5}\right)^{k-1}\leq \left(1.27\right)^{k-1}.
  \end{equation*}
  
  Finally we consider the case of $n/10<s<3n/4$. Let $\alpha=s/n$. Substituting this into $a_s$ gives
  \begin{equation*}
    \left(\frac{((1-\alpha)n)!(\alpha n)!}{(n-1)!(1-\alpha)^{c-2}}\right)^{k-1}.
  \end{equation*}
  Applying Sterling's bound again bounds this expression by 
  \begin{align*}
    \left(\frac{e^2}{\sqrt{2\pi}}\sqrt{n}(1-\alpha)^{2.5-c+(1-\alpha)n}\alpha^{\alpha n +\frac{1}{2}}\right)^{k-1}&\leq \left(\frac{e^2\sqrt{n}}{\sqrt{2\pi}}\alpha^{\alpha n}\right)^{k-1}\enspace.
  \end{align*}
  Where the inequality holds because $(1-\alpha)\leq 1$ and, for any summand that appears in the sum, $2.5-c+(1-\alpha) n>0$.
  The final expression is maximised for $\alpha=3/4$ and there are fewer than $3n/5$ summands with $n/10<s<3n/4$. Therefore the sum of all of these terms can be bounded by,
  \begin{align*}
    \frac{3n}{5}\left(\frac{e^2\sqrt{n}}{\sqrt{2\pi}}\left(\frac{3}{4}\right)^{\frac{3n}{4}}\right)^{k-1}&\leq \left(en\left(\frac{3}{4}\right)^{\frac{3n}{4}}\right)^{k-1} \\
    &\leq 1.
  \end{align*}
  Where we have used that $k\geq 3$ and $n\geq 19$. Adding these up the sum as a whole is bounded by
  \begin{equation*}
    \frac{8}{7}e^{k-1}+1+\left(1.27\right)^{k-1}<1.5e^{k-1}.
  \end{equation*}
  To conclude the proof we consider the expectation. Below we apply the definition of expectation with the bound on the probability above.
  \begin{equation*}
    \Ex[m^{C(G)}]\leq\sum_{c=1}^{n}m^c\frac{1.5^{c-1}}{c!}\left(\frac{e}{n}\right)^{(k-1)(c-1)}
  \end{equation*}
  Notice that every term after the second is at most $\frac{me^{k-1}}{2n^{k-1}}$ times the previous term, thus
  \begin{equation*}
    \Ex[m^{C(G)}]\leq m+\frac{3m^2}{4}\left(\frac{n}{e}\right)^{1-k}\sum_{i=0}^{\infty} \left(\frac{me^{k-1}}{2n^{k-1}}\right)^i
  \end{equation*}
  Then using that $m\leq \frac{1}{2}\left(\frac{n}{e}\right)^{k-1}$ we bound the sum by $4/3$ to find
  \begin{equation*}
    \Ex[m^{C(G)}]\leq m+m^2\left(\frac{n}{e}\right)^{1-k} \enspace.
  \end{equation*}
\end{proof} 
\subsection{Reduction to random inputs}
\label{sec:random_inputs}

Finally, we prove Corollary~\ref{cor:wst-security} from Theorem~\ref{thm:avg-security} by showing that a certain level of average-case security with $k$ messages implies the same level of worst-case security with $k+1$ messages.
In addition, we show that the additional message required to reduce worst-case security to average-case security does not need to be sent through a shuffler.
Note that the expression for the required $k$ in Corollary~\ref{cor:wst-security} is a simple rearrangement of the expression for $\sigma$, so the following lemma is all that remains to be proven.

\begin{lemma}
If $\mech{V}_{k,n}$ provides average-case statistical security with parameter $\sigma$, then $\mech{V}_{k+1,n}$ and $\tilde{\mech{V}}_{k,n}$ provide worst-case statistical security with parameter $\sigma$.
\end{lemma}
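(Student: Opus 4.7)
The plan is to reduce the worst-case security of $\tilde{\mech{V}}_{k,n}$ and $\mech{V}_{k+1,n}$ to the average-case security of $\mech{V}_{k,n}$ via a shift-coupling trick that uses the extra share as a randomization of the inputs.

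First I would observe the rewriting $\tilde{\mech{V}}_{k,n}(\tup{x}) \stackrel{d}{=} (\mech{V}_{k,n}(\tup{x}-\trv{Y}),\trv{Y})$, where $\trv{Y}$ is uniform on $\mathbb{G}^n$. This follows because the unshuffled share $\rv{Y}^{(k+1)}_i$ is marginally uniform in $\mathbb{G}$ and, conditioned on it, the first $k$ shares of user $i$ form a uniform $k$-additive sharing of $x_i - \rv{Y}^{(k+1)}_i$. For $\mech{V}_{k+1,n}$ a similar rewriting $\mech{V}_{k+1,n}(\tup{x}) \stackrel{d}{=} (\mech{V}_{k,n}(\tup{x}-\trv{Y}), \mech{S}(\trv{Y}))$ holds, with an additional independent shuffler $\mech{S}$ applied to the last share.

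Next, for any fixed $\tup{x}, \tup{x}' \in \mathbb{G}^n$ with $\sum_i x_i = \sum_i x_i'$, I would couple the two executions by sampling a single common $\trv{Y}$ (and reusing the same extra shuffler $\mech{S}$ in the $\mech{V}_{k+1,n}$ case). The unshuffled components (respectively, the last shuffled shares) then match deterministically, and a direct expansion of total variation yields
\begin{equation*}
\TV(\tilde{\mech{V}}_{k,n}(\tup{x}), \tilde{\mech{V}}_{k,n}(\tup{x}')) \;\leq\; \Ex_{\trv{Y}}\!\left[\TV(\mech{V}_{k,n}(\tup{x}-\trv{Y}), \mech{V}_{k,n}(\tup{x}'-\trv{Y}))\right],
\end{equation*}
and identically for $\mech{V}_{k+1,n}$. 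Setting $\trv{X} = \tup{x}-\trv{Y}$ and $\trv{X}' = \tup{x}'-\trv{Y}$, the tuples $\trv{X}$ and $\trv{X}'$ are both marginally uniform on $\mathbb{G}^n$ and satisfy $\sum \rv{X}_i = \sum \rv{X}_i'$, which is the distribution in the definition of average-case security, so the expectation is bounded by $2^{-\sigma}$.

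The main obstacle is that the pair $(\trv{X}, \trv{X}')$ produced by the shift-coupling is a \emph{deterministic function of one common uniform seed}, rather than jointly uniform among sum-preserving pairs as literally required by the average-case definition. Justifying that the same $2^{-\sigma}$ bound still controls the shift-coupled expectation is the crux of the proof; it hinges on the permutation-invariance of $\mech{V}_{k,n}$ under coordinate permutations of its input---so that the conditional TV depends only on multiset-level statistics of the inputs---together with the factor-of-$2$ triangle-inequality reduction already used in the proof of Lemma~\ref{lem:single_shuffling}, which shows that the same upper bound on the average-case quantity governs $\Ex_{\trv{X}}[\TV(\mech{V}_{k,n}(\trv{X}), \mech{V}_{k,n}(\trv{X}+\tup{\delta}))]$ for any fixed sum-zero $\tup{\delta}$.
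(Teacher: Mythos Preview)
Your approach is the same as the paper's: rewrite $\tilde{\mech{V}}_{k,n}(\tup{x})$ as $(\mech{V}_{k,n}(\tup{x}-\trv{U}),\trv{U})$ with $\trv{U}$ uniform, condition on the common value of the last share to obtain $\Ex_{\trv{U}}\bigl[\TV(\mech{V}_k(\tup{x}-\trv{U}),\mech{V}_k(\tup{x}'-\trv{U}))\bigr]$, and then identify this with the average-case quantity. The paper in fact writes that last identification as an \emph{equality} without comment. You are right to flag it: the shift-coupled pair $(\tup{x}-\trv{U},\tup{x}'-\trv{U})$ is supported on the single coset $\{\trv{X}'-\trv{X}=\tup{x}'-\tup{x}\}$ rather than on all equal-sum pairs, so the two expectations are not literally equal, and the lemma read as a black-box implication from the average-case hypothesis is not established by the paper's displayed chain.

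Your proposed repair---insert the uniform reference distribution via the triangle inequality, exactly as in the first step of Section~\ref{sec:single_shuffling}---does recover the claimed bound: one gets
\[
\Ex_{\trv{U}}\bigl[\TV(\mech{V}_k(\tup{x}-\trv{U}),\mech{V}_k(\tup{x}'-\trv{U}))\bigr]\;\le\;2\,\Ex_{\trv{X}}\bigl[\TV_{|\trv{X}}(\mech{V}_k(\trv{X}),\trv{V})\bigr],
\]
with $\trv{X}$ uniform on $\mathbb{G}^n$ and $\trv{V}$ uniform on the corresponding sum-slice, and the right-hand side is precisely the intermediate quantity that the analysis in Section~\ref{sec:single_shuffling} bounds by $2^{-\sigma}$. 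So your argument is correct and in fact more careful than the paper's at this one step; note only that the fix uses that intermediate bound rather than the bare average-case hypothesis, and that the permutation-invariance observation, while true, is not needed.
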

\begin{proof}
Fix a pair of inputs $\tup{x}$ and $\tup{x}'$ with the same sum.
Since the output of $\mech{V}_{k+1,n}(\tup{x})$ can be simulated directly from the output of $\tilde{\mech{V}}_{k,n}(\tup{x})$ by applying a random permutation to the last $n$ elements, we have $\TV(\mech{V}_{k+1,n}(\tup{x}),\mech{V}_{k+1,n}(\tup{x}')) \leq \TV(\tilde{\mech{V}}_{k,n}(\tup{x}), \tilde{\mech{V}}_{k,n}(\tup{x}'))$, and therefore it suffices to show that $\tilde{\mech{V}}_{k,n}$ provides worst-case statistical security with parameter $\sigma$.

The key observation that allows us to reduce the worst-case security of $\tilde{\mech{V}}_{k,n}$ to the average-case security of $\mech{V}_{k,n}(\tup{x})$ is to observe that the addition of an extra share can be interpreted as adding a random value to each user's input, effectively making the inputs uniformly random.
To formalize this intuition we observe that $\mech{R}_k$ admits a recursive decomposition as follows.
Let $\rv{U} \in G$ be a uniformly random group element and $x \in G$. Then we have $\mech{R}_1(x) = x$ and, for $k \geq 1$,
\begin{align}
\mech{R}_{k+1}(x) = (\mech{R}_{k}(x - \rv{U}), \rv{U}) \enspace.
\end{align}
Expanding this identity into the definition of $\tilde{\mech{V}}$ and writing $\tup{\rv{U}} = (\rv{U}_1, \ldots, \rv{U}_n)$ for the uniform random variables arising from applying the above expression for $\mech{R}_{k+1}$ to the input from each user, we obtain
\begin{align}
\tilde{\mech{V}}_{k}(\tup{x})
=
(\mech{V}_{k}(\tup{x} - \tup{\rv{U}}), \tup{\rv{U}}) \enspace.
\end{align}
Note that here $\tup{x} - \tup{\rv{U}}$ is a uniform random vector in $G^n$.
The result now follows from matching the uniform randomness from $\tup{\rv{U}}$ observed when executing the protocol with two inputs with the same sum:
\begin{align}
\TV(\tilde{\mech{V}}_{k+1}(\tup{x}), \tilde{\mech{V}}_{k+1}(\tup{x}'))
&=
\TV((\mech{V}_{k}(\tup{x} - \tup{\rv{U}}), \tup{\rv{U}}), (\mech{V}_{k}(\tup{x}' - \tup{\rv{U}}'), \tup{\rv{U}}'))
\\
&=
\Ex_{\tup{\rv{U}}} [\TV_{|\tup{\rv{U}}}(\mech{V}_{k}(\tup{x} - \tup{\rv{U}}), \mech{V}_{k}(\tup{x}' - \tup{\rv{U}}))]
\\
&=
\Ex_{\tup{\rv{X}},\tup{\rv{X}}'} [\TV_{|\tup{\rv{X}},\tup{\rv{X}}'}(\mech{V}_{k}(\tup{\rv{X}}), \mech{V}_{k}(\tup{\rv{X}}'))]
\enspace,
\end{align}
where $\tup{\rv{X}}$ and $\tup{\rv{X}}'$ are tuples with $n$ uniformly random group elements conditioned on $\sum_i \rv{X}_i = \sum_i \rv{X}_i'$.
\end{proof}

\bibliographystyle{plain}
\bibliography{main.bib}

\end{document}